\newcommand{\wh}{\widehat}
\newcommand{\wt}{\widetilde}
\newcommand{\bfE}{{\bf E}}
\newcommand{\bfe}{{\bf e}}
\newcommand{\bfg}{{\bf g}}
\newcommand{\bR}{{\bf R}}
\newcommand{\pa}{\partial}
\newcommand{\la}{\lambda}
\newcommand{\str}{\stackrel}
\newcommand{\cL}{{\mathcal L}}
\newcommand{\cH}{{\mathcal H}}
\newcommand{\cE}{{\mathcal E}}
\def\Symm{\operatorname{Symm}}
\def\eqref#1{(\ref{#1})}
\newtheorem*{theorem}{Theorem}
\newtheorem{proposition}{Proposition}
\theoremstyle{definition}
\newtheorem{definition}{Definition}
\theoremstyle{remark}
\newtheorem{remark}{Remark}
\begin{document}

\title{Economic law of increase of Kolmogorov complexity.
       Transition from financial crisis
       2008 to the zero-order phase transition
      (social explosion)}

\author{V.P.~Maslov}

\date{}
\maketitle

\begin{abstract}
In \cite{MMM}, a two level model of the occurrence of financial
pyramid (bubbles) has been considered. We also considered the
mathematical analogy  of this model to Bose condensation. In the
present paper, we explain why Ponzi schemes and bubbles result in
a crisis in real economics. In \cite{DAN_Port}, the law of
increase of entropy in financial systems, and consequently
increase of Kolmogorov complexity, is formulated. If this law is
broken, the financial system  makes a phase transition to a
different state. In \cite{Prom_Math} the author considered a two
level model of the zeroth-order phase transition which was
interpreted in \cite{QuantumEc} as an analog of social
catastrophe. In the present paper we also examine this model.
\end{abstract}

By the end of 2008, the United States has accumulated considerable
debts. The debts can be counted by the same scheme as GDP, that is,
as the difference of non-repayed and repayed debts in a year
provided that this difference is not zero, i.e., the debt has
increased owing to the velocity of debt.

Just as negative money, i.e., debts, do not occur in the definition of
velocity of money and in the calculation of the GDP, so the
definition of the velocity of debt does not involve positive money
(earned money, etc.).

If $N_i$ is the debt resulting from one turnover (one act of
borrowing from somebody and paying the dept to somebody else),
then the total debt is equal to $M_i=N_i\sigma$, where
$\sigma$, the number of turnovers per annum, will be called the
velocity of debt (by analogy with ``velocity of money''). The
highest velocity of debt is observed for bubbles, or Ponzi
schemes. A mortgage debt has only one turnover. For a car loan,
the velocity of debt is also equal to one if the customer made a
large borrowing and has not repaid the loan yet.

Thus, $\sum N_i\sigma_i \leq M$, where $M$ is the total debt (like
a ``negative GDP'').
Hence the debts obey the same theorems of number theory as GDP
(see \cite{QuantumEc} and also \cite{Negat_Dim}).

If $M$ is the total debt and $\sigma$ is the average velocity of
debt (analog of the temperature $\theta$ in physics), then the
``negative money supply'' is given by $N=M/\sigma$, where
$N=\sum_{i=1}^k N_i$, $k\gg 1$, and $N_i$ corresponds to the
$i$the set of subjects, which has the cardinality $\alpha_i$.
If we adopt, similarly to the Baumol--Tobin model,
that the velocity of debt of a single subject is proportional
to the square root of its debt (cf.~\cite[Appendix to Chapter~13]{BT})
and take into account that the debt is equal to $iN_i$,
then $N_i$ is proportional to $i$ and it is more accurate to assume that
$N_i=in_i$, where $n_i$ are distributed uniformly.
Hence $\sum^{k}_{i=1}\alpha_i i^2 n_i=M$ and
$\sum in_i=N$. For $\alpha_i$ we assume that they decrease
according to the Pareto distribution.

The critical number $N_0$ after which the ''Bose condensation
phenomenon'' occurs is computed in~\cite{QuantumEc,Masl_Naz_1_III,NegDim}.
In these papers, the Pareto distribution was proved
considered as negative dimension,
namely,
it was proved that the larger the Pareto index
for the distribution of $\alpha_i$
and the lesser the Baumol--Tobin dependence,
then the larger the critical number $N_0$ and the more difficult
it can be overcome.

However, there is a substantial difference between inflation
occurring if $N>N_0$ for the case of income and the case in which
$N>N_0$ for debts.

In the first case (inflation) small denominations fall out of
circulation (part of the money is abolished), and the nominal
value of money is reduced. In the second case, debts cannot be
annihilated by denomination, and excessive debts for $N>N_0$
accumulate at the ''lower level'', i.e., at the subjects who have
one turnover per annum. The zero velocity of debt will be taken
into account in incomes rather than debts.
On the other hand, it follows from the theorem proved
in~~\cite[Appendix to Chapter~13]{BT}
that small velocities of both the money and the debts
practically do not affect the value of~$N_0$.
Hence the definition of nonzero lower level of velocity of money
is not essential in the rather schematic GDP model.

However, since the lower velocity level for debts plays an
exceptional role and the model involving the number of turnovers is
too coarse for determining the least velocity of debt, one has to
single out the ``longest-term debts'' whose repayment spans over
many years, i.e., whose repayment velocity is small.

How the long-term project contribution to GDP is calculated?
It is necessary to calculates the excess
of the cost of the project already performed part in a year
over the cost of the entire project in $L$ years
minus the total expenditures in $L$ years divided by~$L$.

Similarly, the rate of the debt ``turnover''
in the case of a long-term credit obtained for $L$ years
is equal to $1/L$.
Therefore, if, without loss of generality,
we assume that the numbers of turnovers larger than~$2$
are integer numbers,
then, for a single ``turnover'' (a credit taken once),
we must take the credit period (the number of years $L$) into account.
Correspondingly, the rate of the debt turnover equal to~$1$
splits into rates of the debt turnover that are less than~$1$.
Hence the debt Bose condensation accumulates precisely
at these slow rates of the debt turnovers,
and the bankruptcy expects precisely them.
This follows from the mathematical theorem of the number theory.
But the theorem does not speak how this is realized in the society.
Possibly, to escape the bankruptcy,
the debtors with high level of the debt velocity
must again take long-term deposits.
Perhaps, this chain can be explained by economists,
but the mathematical law inexorably predicts this scenario
of the debt crisis.

It is on these debts that the excessive debts exceeding $N_0$
mainly fall; as a result, the long-term borrowers default on
their debts, which fall out of the chain and descend on the
subsequent levels.

Hence mortgage is the first to collapse, and the car loans follow.

Thus, the mathematical theory, which pertains to number theory, is
the same, but the result in economics is different.

It follows from the preceding that to fight the crisis and to
avert it in future one should favor a debt structure in which the
share of high-velocity debts is smaller that in the existing
structure.
In~\cite{QuantumEc,Masl_Naz_1_III}, it was proved that if the Pareto index
is not too large, then it is possible to increase the critical number
by introducing a larger number of currencies, for example,
in each of the United States,
by introducing its own currency in addition to the dollar.

\newpage

\section{Economic Law  of Increase of Kolmogorov  Complexity}

In his celebrated paper on complexity, A.N. Kolmogorov defined
Kolmogorov complexity as a minimal code of a sequence of numbers.
He also mentioned some other complexity, which was related to
decoding. For the stock exchange, the definition of a code length
that involves decoding is more appropriate to the situation under
consideration, which leads to an arbitration-free situation in
the end.

Suppose that a trader contrived a combination that brought a big
gain to this trader. The market intuitively deciphers this
combination (algorithm) and, thereby, annihilates it. Then, the
trader invents some more complicated algorithm, and so on. In the
end, the complexity of the algorithm becomes so high that it only
slightly differs from a pseudorandom complexity. In the limit,
this leads to an arbitration-free situation.

For this reason, it is more natural to understand complexity as
the length of a code with polynomial decoding complexity. This
code should be used to encode a sequence. The longer the code,
the closer the situation under consideration to general position,
which corresponds to the law of large numbers. If the minimum
specified above tends to zero as $N\to\infty$, then the set of
deals $\{N_i\}$ is close to an arbitration-free set.

In the language of probability theory, the initial probabilities
corresponding to $\lambda_i$, which equal $G_i/G$ in the sense of
von Mises, tend to zero as $N\to\infty$ (and, hence, as
$G\to\infty$).

Therefore, the initial mathematical expectation
$$
\frac{\sum g_i \la_i}{\sum g_i}= \sum\frac{g_i \la_i}{G}
$$
is such that each term of the probability $p_i$ vanishes in the
limit. However, the mathematical expectation tends to a constant
larger than zero and the partial sums
$\sum_{i=i_\alpha}^{j_\alpha} p_i$ do not tend to zero.

Similarly, the mathematical expectation $\sum\frac{N_i \la_i}{N}$
does not tend to zero, whereas  $\frac{N_i}{N} \to 0$, but
$\frac1N \sum_{i=i_\alpha}^{j_\alpha} N_i$ does not tend to zero.

This is a very important economy factor, especially in the
psychological aspect (see \cite{QuantumEc,Luce}).

Since the tendency of the market to an arbitration-free situation
is equivalent to the eagerness of the market participants to gain
profits, it follows that the entropy, if it depends on some
parameters, must increase with respect to these parameters most
rapidly, in the direction of steepest ascent.

If $H(E, N)$ depends on $k$ additional $x_1, \cdots, x_k$, i.e.,
$H=H(x_1, \cdots, x_k)$, (the dependence on $E$ and $N$ is not
indicated), then
$$
\str{.}{x}_i=c(H)\frac{\pa H}{\pa x_i}
$$
\begin{equation}
\frac{\pa H}{\pa t} =\sum_i^k \frac{\pa H}{\pa x_i}\frac{\pa
x_i}{\pa t}=c(H)\sum_i^k \bigg(\frac{\pa H}{\pa x_i}\bigg)^2, \ \ c(H)>0 \ \
H|_{t=0}=H(x_1, \cdots, x_k). \label{N18'}
\end{equation}

For simplicity, suppose that $C(H) = 1$. Otherwise, setting
$F(H)=\int C(H) dH$, we obtain $\frac{\pa F}{\pa t}=(\nabla F)^2$.
This implies
$$
H(x,t)=\min_\xi\bigg(\frac{(x-\xi)^2}{2t}+H(\xi)\bigg).
$$
\begin{remark}
Performing geometric quantization \cite{Quantization}, we obtain
$$
\wh{H} (x,t)= \ln\frac1{\sqrt{t^k}}\int e^{-\frac{(x-\xi)^2}{2t}}
e^{H(\xi)}d\xi, \ \ \xi \in R^k.
$$
This asymptotically coincides with the tunnel canonical operator
on the Lagrangian manifold  $p_i=\frac{\pa M}{\pa x_i}$ shifted
along the trajectories  $\str{.}{x}=p$ and $\str{.}{p}=0$. The
quantization of Eq. \eqref{N18'}  leads to solution of the heat
equation \cite{Shiryaev-Fund}
\begin{equation}
N\frac{\pa u}{\pa t}=\cH (\ln u^2) \Delta u \qquad  u|_{t=0}
=H(x_1, \cdots, x_n), \label{N19'}
\end{equation}
Quantizing this equation (as a Bose system) and setting
$u=\Psi^+$ and $ u^2=\Psi \cdot \Psi^-$, we obtain a linear
equation whose asymptotics is again the canonical operator.

The prices $\la_i(x_1, \cdots, x_k)$  vary according to the
equation
\begin{equation}
\frac{d\la_i}{dt}= \sum_j\frac{\pa \la_i}{\pa x_j}\frac{dx_i}{dt}
= \sum_j\frac{\pa \la_i(x_1, \cdots, x_k)}{\pa x_j} C(H)\frac{\pa
H}{\pa x_J}. \label{N20'}
\end{equation}
\end{remark}

Thus, if we know the change of some (specific) price, we can find
the function $C(H)$ and, thereby, solve the problem completely.

A similar situation may arise when, e.g., barriers between
countries (such as customs tariffs) decrease and prices tend to even
according to the law of steepest ascent of entropy, because
profiteers begin to use the arising arbitration (difference in
prices) at a furious pace, which leads to an arbitration-free
situation most rapidly. Such was the situation in the Soviet
Union in the late 1980s when the iron curtain collapsed
\cite{QuantumEc}.

On September 30, 2008, the National Debt Clock in Manhattan ran
out of digits as the United States public debt exceeded \$10
trillion, a significant symbol of the current financial and
economical crisis. There are many diverse factors behind the
crisis, and they are naturally a subject of broad interest.
However, apart from economics laws, which are widely discussed
nowadays, there is a purely mathematical law contributing to the
disaster, an inexorable law of numbers, which economists, let
alone the general public, fail to recognize.

Suppose we want to deposit two kopecks in two different banks
(see \cite{QuantumEc}; then we can say that there are three
possibilities: (1) put both kopecks into one of the banks; (2)
put both kopecks into the other bank; (3) one kopeck in one bank,
the other, in the other one. Here it is of no consequence which
of the two coins we deposit in the first bank and what is its
year of issue. Now imagine a situation in which we are depositing
one kopeck and one pence instead of two kopecks. In that case, we
have four options rather than three, because it is significant
which coin we placed in what bank, and so the variant in which
the coins are placed in different banks yields two different
options: (a) one kopeck to bank 1 and one pence to bank 2; (b)
one kopeck to bank 2 and one pence to bank 1.

These two cases give rise to two different laws in number theory
(see \cite{NumberTheory1}). In the second case, the corresponding
law implies that the total number of cents (and hence of dollars
and of millions of dollars), i.e., the total amount of money in
circulation, cannot be arbitrarily large. There is a threshold,
and if the amount of money in a given currency exceeds that
threshold, then an economic disaster occurs. The same thing
pertains to negative amounts of money, i.e., debts.

Mathematically, this kind of disaster can be described as an
analog of ``Bose condensation'': should there be too many
particles in a system, all the excessive particles would collapse
into the ground state. The condensation effect is indeed observed
in finance: for example, if the inflation rate is too high (too
much money has been issued), then the lower denominations (like
cents, pence, or kopecks) die away, i.e., are withdrawn from
circulation.

It is important to note that the condensation phenomenon is solely
due to the fact that particles are indistinguishable---there is no
condensation at all if the particles are distinguishable (and so
obey the Gibbs statistics). One should simultaneously use as many
currencies as possible. For example, if, in addition to dollar,
its own currency were introduced in each of the United States,
then the sum of the thresholds for all states would be much
greater than the threshold in the case of a single currency; it
would be much harder to exceed this new threshold, and hence the
disaster would at least be postponed. By the same pattern, all
national currencies should have been retained along with the
euro. Then the violation of a country's threshold
could cause a crisis in that country but would not affect the
other economies, just as the 1998 default in Russia did not bring
down the world economy.
(Unfortunately, the dollar largely circulates outside the United States,
and hence the world on the whole is much more affected this time).

Thus, the introduction of $K$ currencies has raised $N_{cr}$
by the factor of approximately $K^{1/2}$.
Or, the introduction of concurrently existing $K$ currencies increases the crisis
threshold by the factor $K^{1/2}$.
The outcome will be more or less the same regardless of whether these new
currencies are nationwide or function only within their respective states.

The above  described transition from   the MMM pyramid
(''bubbles'') leads to entropy increase, therefore, to the
Kolmogorov complexity. Hence  ''bubbling'' leads to decrease of
complexity. Likewise, transition to only one currency leads to
decrease of complexity, and this  is contributing to a crisis.
Development of advanced technologies  results in increase of
complexity, therefore it hinders a crisis.

Thus, a general law of economics  is  the law of \textit{
increase of complexity}\footnote{The analogy to the
above-mentioned effect in statistical physics is given  e.g., in
\cite{Zurek}.}. Otherwise, a crisis is imminent.

\section{Nonstationary financial averaging}

Consider a random variable $\lambda$ which takes $l$ different
values $\lambda_1,\dots,\lambda_l$. In what follows, we assume
that the values of this random variable satisfy the following
condition: if $k_1,\dots,k_l$ are integers such that
\begin{equation}
\sum_{i=1}^lk_i=0,\qquad\sum_{i=1}^l\lambda_i k_i=0. \label{ev1}
\end{equation}
then $k_i = 0$  for all $i = 1, 2, ..., l$.

In \cite{Axiom}, the notion of financial averaging and financial
averaging axioms were introduced. Suppose given a convex function
$f(x)$ having inverse  $f^{-1}(x)$, where $x\in\bR$ and a set of
nonnegative numbers $p_i$, where $i=1,\dots,l$; we call them
weights in what follows.

The first axiom \cite{Axiom} is that the financial averaging of
the random variable $\lambda$ corresponding to the function
$f(x)$ and the set of weights$p_i$, $i=1,\dots,l$ is the
expression
\begin{equation}
\bfE_{f,p}(\lambda)=f^{-1}\left(\sum_{i=1}^l p_i
f(\lambda_i)\right). \label{ev3}
\end{equation}

We modify the fourth axiom as follows: For any set of weights
$p_i$, any set of numbers  $\lambda_i$ , and any number $a$, the
following equality holds:
\begin{equation}
\bfE_{f,p}(\wt{\lambda})=Ca+\bfE_{f,p}(\lambda), \label{ev4}
\end{equation}
where $\wt{\lambda}_i=\lambda_i+a$, and $C$ is a number not
depending on $a$. Then, the following assertion is valid.

\begin{proposition} Condition \eqref{ev4} implies that the function $f(x)$ has the form
\begin{equation}
f(x)=A\exp(-\beta x),\qquad \text{or}\qquad f(x)=A x+D,
\label{ev5}
\end{equation}
where the numbers $\beta$ and  $A$ are nonzero.
\end{proposition}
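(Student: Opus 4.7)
Applying $f$ to both sides of \eqref{ev4} turns the modified axiom into the functional identity
$$
\sum_{i=1}^l p_i f(\lambda_i + a) = f\bigl(\bfE_{f,p}(\lambda) + Ca\bigr)\qquad\text{for all } a,
$$
so the map $a \mapsto f^{-1}(\sum_i p_i f(\lambda_i + a))$ is affine with slope $C = C(p,\lambda)$. My plan is to differentiate twice in $a$ at $a = 0$, eliminate the unknown $C$, and reduce the axiom to an algebraic relation on $f$ strong enough to pin down its form.

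Differentiating once and twice yields $\sum_i p_i f'(\lambda_i) = C f'(\bfE)$ and $\sum_i p_i f''(\lambda_i) = C^2 f''(\bfE)$; eliminating $C$ from these two relations gives the identity
$$
\Phi\Bigl(\sum_i p_i f(\lambda_i)\Bigr) = \frac{\sum_i p_i f''(\lambda_i)}{\bigl(\sum_i p_i f'(\lambda_i)\bigr)^2},\qquad \Phi(u) := \frac{f''(f^{-1}(u))}{(f'(f^{-1}(u)))^2}.
$$
I would then instantiate this with $l = 1$ (for which the independence condition \eqref{ev1} is vacuous) and vary $p_1 > 0$: the identity simplifies to the statement that $p\,\Phi(py)$ is independent of $p$ for every $y$ in the range of $f$. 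Differentiating in $p$ yields $(z\Phi(z))' = 0$, hence $\Phi(z) = c/z$ for some constant $c$, and pulling this back through $f^{-1}$ gives
$$
f(x)\,f''(x) = c\,f'(x)^2.
$$

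Next, I would apply the identity with $l = 2$ and distinct $\lambda_1,\lambda_2$. Expanding both sides as polynomials in $(p_1, p_2)$ and using the one-point relation above to cancel the $p_i^2$ coefficients leaves the cross-term identity
$$
f(\lambda_1)f''(\lambda_2) + f(\lambda_2)f''(\lambda_1) = 2c\,f'(\lambda_1)f'(\lambda_2).
$$
Setting $g := f/f'$ and $h := f''/f'$ and dividing by $f'(\lambda_1)f'(\lambda_2)$, this becomes $g(x)h(y) + g(y)h(x) = 2c$, while the one-point relation reads $g(x)h(x) = c$. For $c \ne 0$, substituting $g = c/h$ gives $h(x)/h(y) + h(y)/h(x) = 2$, which forces $h \equiv \mu$ constant; the case $c = 0$ reduces directly to $f'' \equiv 0$, so $f$ is linear.

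Finally, $h \equiv \mu$ means $(\log f')' = \mu$, hence $f' = Be^{\mu x}$ and $f(x) = Ae^{\mu x} + D$ with $A \ne 0$. The extra constraint that $g = f/f'$ must also be constant (since $c = gh$ and both factors are) is incompatible with $D \ne 0$, leaving $f(x) = A\exp(-\beta x)$ with $\beta = -\mu \ne 0$; the subcase $\mu = 0$ delivers $f(x) = Ax + D$. The main technical point will be the bookkeeping in the $l = 2$ expansion, together with the subtle use of both invariants $g$ and $h$ at the end to rule out the additive constant that would otherwise survive in the exponential branch.
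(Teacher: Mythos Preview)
The paper does not actually supply a proof here: the single sentence after the proposition just refers the reader to \cite{Axiom}. So there is no in-text argument to compare yours against line by line; what follows is an assessment of your plan on its own merits.

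Your strategy is sound. Differentiating the identity $\sum_i p_i f(\lambda_i+a)=f(\bfE+Ca)$ twice at $a=0$ and eliminating the unknown slope $C=C(p,\lambda)$ is the natural first move, and the one-point specialization correctly produces $f f''=c(f')^2$. You are also right that this ODE by itself is not decisive---for $c\notin\{0,1\}$ it has power-law solutions $f(x)=(\alpha x+\beta)^{1/(1-c)}$---so the two-point cross identity is genuinely needed. The reduction to $h(\lambda_1)/h(\lambda_2)+h(\lambda_2)/h(\lambda_1)=2$ forcing $h\equiv\mu$, followed by the observation that $g=c/h$ must then also be constant (which kills the additive constant $D$ in the exponential branch), is clean and correct.

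Two points worth tightening when you write it up. First, the setup fixes a single $l$, so your ``$l=1$'' and ``$l=2$'' cases should be phrased as instances of the given axiom with all but one or two weights set to zero; the paper allows nonnegative $p_i$, so this is legitimate. Second, the relation $p\,\Phi(py)=\Phi(y)$ is only literally valid when $py$ stays in the range of $f$; it is enough to differentiate at $p=1$ to get $(z\Phi(z))'=0$ at every interior point of the range, and connectedness of the range then gives $\Phi(z)=c/z$. Finally, you are tacitly using $f\in C^2$ and $f'\ne0$, which ``convex with an inverse'' does not literally grant; state this as a standing smoothness assumption.
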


This assertion is proved by the method described in \cite{Axiom}.

Hereafter, we consider the case of a nonlinear function $f(x)$.
According to Propositon 1, the financial averaging of a random
variable $\lambda$ corresponding to a set of weights $p_i$ has the
form
\begin{equation}
\bfE_{p}(\lambda)=-\frac1\beta\ln\left(\sum_{i=1}^l p_i
e^{-\beta\lambda_i} \right).\label{ev6}
\end{equation}

Properly, Axioms 2 and 3 \cite{Axiom} do not hold; thus, we
regard this averaging as nonstationary and consider the process
stabilizing it. The stabilization procedure does not essentially
use the fourth axiom.

Following \cite{RJMP}, we associate the random variable under
consideration with an n-dimensional space $L$ and an operator
$\Lambda$ in this space. The operator $\Lambda$ has eigenvectors
$\bf{e}_i$, where $i = 1, 2, \dots, l$, which form an orthonormal
basis in the space $L$. The eigenvalue corresponding to the
eigenvector $\bf{e}_i$ equals $\lambda_i$. Consider a series of
$M$ identical trials whose outcomes are occurrences of the values
$\lambda_1,\dots,\lambda_n$  of the random variable. According to
\cite{RJMP}, this probability problem corresponds to the space
$\cL_M$ being the tensor product of $M$ copies of $L$. The
vectors in the space $\cL_M$ have the form
\begin{equation}
\Psi=\sum_{i_1=1}^l\dots\sum_{i_M=1}^l \psi(i_1,\dots,i_M)
\bfe_{i_1}\otimes\dots\otimes\bfe_{i_M}, \label{ev7}
\end{equation}
where  $\psi(i_1,\dots,i_M)$ is an arbitrary function of the set
of discrete variables   $i_s=1,\dots,l$, $s=1,\dots, M$.

To the series of trials under consideration, there corresponds,
in addition to the space $\cL_M$, the operator $\wh{\cH}_M$
acting on vectors \eqref{ev7} as
\begin{equation}
\wh{\cH}_M\Psi=\sum_{i_1=1}^l\dots\sum_{i_M=1}^l
\left(\sum_{s=1}^M\lambda_{i_s}\right)
\psi(i_1,\dots,i_M)\bfe_{i_1}\otimes\dots \otimes\bfe_{i_M}.
\label{ev8}
\end{equation}

In further considerations, we use the notation
\begin{equation}
M_i(i_1,\dots,i_M)=\sum_{s=1}^M\delta_{ii_s}, \label{ev9}
\end{equation}
where  $\delta_{ii'}$ is the Kronecker symbol. Let us define
projectors $\wh{P}_{\{M\}}$
\begin{equation}
\wh{P}_{\{M\}}\Psi=\sum_{i_1=1}^l\dots\sum_{i_M=1}^l\left(\prod_{i=1}^l
\delta_{M_iM_i(i_1,\dots,i_M)}\right)
\psi(i_1,\dots,i_M)\bfe_{i_1}\otimes\dots \otimes\bfe_{i_M}.
\label{ev10}
\end{equation}
Here, $\{M\}$ denotes an arbitrary set of nonnegative integers
$M_1,\dots,M_n$ satisfying the condition
\begin{equation}
\sum_{i=1}^lM_i=M. \label{ev11}
\end{equation}

It is easy to show that operators \eqref{ev11}  satisfy the
equality
\begin{equation}
\wh{P}_{\{M\}}\wh{P}_{\{M'\}}=\prod_{i=1}^l\delta_{M_iM_i'}
\wh{P}_{\{M\}} \label{D12}
\end{equation}
thus, these operators are indeed projectors. The operator
$\wh{\cH}_M$  defined by ~\eqref{ev8} is expressed in terms of
the projectors $\wh{P}_{\{M\}}$ as follows:
\begin{equation}
\wh{\cH}_M=\sum_{\{M\}}\left(\sum_{i=1}^l\lambda_iM_i\right)\wh{P}_{\{M\}}.
\label{D13}
\end{equation}
In the subspace  $\cL_{\{M\}}$, onto which the space $\cL_M$  is
projected by $\wh{P}_{\{M\}}$,  we distinguish the basis formed by
vectors of the form
\begin{equation}
\bfE_{i_1,\dots,i_M}=\bfe_{i_1}\otimes\dots\otimes\bfe_{i_M},
\label{D14}
\end{equation}
where $\bfe_i$ occurs $M_i$ times, or, in other words, the set
$i_1,\dots,i_M$ satisfies the conditions
\begin{equation}
M_i(i_1,\dots,i_M)=M_i. \label{D15}
\end{equation}
Consider the following system of vectors $\Phi_{\{M\}}$:
\begin{equation}
\Phi_{\{M\}}=\wh{P}_{\{M\}}\bfe\otimes\dots\otimes\bfe,\qquad
\bfe=\sum_{i=1}^l\bfe_i. \label{ev15}
\end{equation}
Vectors~\eqref{ev15} are equal to the sum of all vectors
~\eqref{D14} from the subspace $\cL_{\{M\}}$; in addition,
\begin{equation}
\Phi_{\{M\}}=\frac1{M!}\left(\prod_{i=1}^lM_i!\right)
\mbox{Symm}_{i_1,\dots,i_M}\bfE_{i_1,\dots,i_M}, \label{D16}
\end{equation}
where $\mbox{Symm}_{i_1,\dots,i_M}$  is the operator of
symmetrization over the indices  $i_1,\dots,i_M$.

We endow the space $\cL_M$  with the norm
\begin{equation}
\|\Psi\|=\sum_{i_1}^l\dots\sum_{i_M=1}^l |\psi(i_1,\dots,i_M)|.
\label{ev12}
\end{equation}

\section{The evolution process}
In this section, we define the notions of evolution and data
reduction.

\begin{definition}The one-step evolution of an ensemble
consisting of $M$ elements, i.e., of a vector $\Psi\in\cL_M$, is
the following vector belonging to the space $\cL_M$:
\begin{equation}
\Psi_1=\exp\left(-\beta\wh{\cH}_M\right)\Psi, \label{ev13}
\end{equation}
where  $\beta$ is a parameter.
\end{definition}
\begin{definition}
The data reduction, or factorization, is the operation that maps
each vector $\Psi\in\cL_M$ to another vector according to the rule
\begin{equation}
R(\Psi)=\sum_{\{M\}}\|\wh{P}_{\{M\}}\Psi\|\Phi_{\{M\}}.
\label{ev14}
\end{equation}
where summation is over all sets of nonnegative integers
$M_1,\dots,M_n$, satisfying condition~\eqref{ev11}.
\end{definition}

To each set of nonnegative numbers $g_i$, where  $i=1,\dots,l$ we
assign the following vector of the space $\cL_M$:
\begin{equation}
\Psi_g=\bfg\otimes\dots\otimes \bfg, \label{ev16}
\end{equation}
where $\bfg$ is the vector from $L$, defined by
\begin{equation}
\bfg=\sum_{i=1}^l g_i\bfe_i. \label{ev17}
\end{equation}

\begin{proposition}
If $g_i=p_ie^{-\beta\lambda_i}$, then the expression
\begin{equation}
-\frac1{M\beta}\ln\left(\|\Psi_g\|\right) \label{ev18}
\end{equation}
coincides with financial averaging~\eqref{ev6}
\end{proposition}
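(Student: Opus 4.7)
The plan is to unfold the tensor product defining $\Psi_g$, apply the $\ell^1$-type norm \eqref{ev12}, and recognize the result as a pure $M$th power.

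First I would write out $\Psi_g=\bfg\otimes\cdots\otimes\bfg$ explicitly. Using $\bfg=\sum_i g_i\bfe_i$ and multilinearity of the tensor product, this gives
\begin{equation*}
\Psi_g=\sum_{i_1=1}^l\cdots\sum_{i_M=1}^l g_{i_1}g_{i_2}\cdots g_{i_M}\,\bfe_{i_1}\otimes\cdots\otimes\bfe_{i_M},
\end{equation*}
so the coefficient function is $\psi_g(i_1,\dots,i_M)=g_{i_1}\cdots g_{i_M}$.

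Next I would apply the norm \eqref{ev12}, obtaining
\begin{equation*}
\|\Psi_g\|=\sum_{i_1=1}^l\cdots\sum_{i_M=1}^l |g_{i_1}\cdots g_{i_M}|.
\end{equation*}
Since the weights $p_i$ are nonnegative by assumption and $e^{-\beta\lambda_i}>0$, we have $g_i=p_ie^{-\beta\lambda_i}\ge 0$, so the absolute values can be dropped. The sum then factors as the product of $M$ identical one-index sums:
\begin{equation*}
\|\Psi_g\|=\Bigl(\sum_{i=1}^l g_i\Bigr)^{\!M}=\Bigl(\sum_{i=1}^l p_ie^{-\beta\lambda_i}\Bigr)^{\!M}.
\end{equation*}

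Finally I would take $-\frac{1}{M\beta}\ln$ of both sides; the exponent $M$ cancels the prefactor $1/M$, and what remains is exactly expression \eqref{ev6} for $\bfE_p(\lambda)$. There is no real obstacle here: the statement is essentially a restatement of the fact that the chosen $\ell^1$-norm turns the $M$-fold tensor power into an $M$th numerical power, and the only point to check carefully is the sign of $g_i$, which ensures that the absolute values in the definition of the norm are inessential.
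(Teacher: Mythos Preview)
Your proof is correct and follows essentially the same route as the paper: expand the tensor power, apply the $\ell^1$-norm~\eqref{ev12}, factor the $M$-fold sum into $(\sum_i g_i)^M$, and take logarithms. You are in fact slightly more careful than the paper in explicitly justifying why the absolute values in~\eqref{ev12} may be dropped.
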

\begin{proof}
Let us write vector~\eqref{ev16} in the form
\begin{equation}
\Psi_g=\left(\sum_{i_1=1}^lg_{i_1}\bfe_{i_1}\right)\otimes\dots\otimes
\left(\sum_{i_M=1}^lg_{i_M}\bfe_{i_M}\right)=\sum_{i_1=1}^l\dots\sum_{i_M=1}^l
g_{i_1}\dots g_{i_M}\bfe_{i_1}\otimes\dots\otimes\bfe_{i_M}.
\label{D19}
\end{equation}
Substitution of \eqref{D19} into~\eqref{ev12}  yields
\begin{equation}
\|\Psi_g\|=\sum_{i_1=1}^l\dots\sum_{i_M=1}^l g_{i_1}\dots
g_{i_M}=\left( \sum_{i=1}^l g_i\right)^M. \label{D20}
\end{equation}
Taking into account that $g_i ==p_ie^{-\beta\lambda_i}$ and
applying~\eqref{D20}, we obtain
\begin{equation}
-\frac1{M\beta}\ln\left(\|\Psi_g\|\right)=-\frac1{\beta}\ln\left(\sum_{i=1}^l
p_ie^{-\beta\lambda_i}\right). \label{D21}
\end{equation}
The right-hand side of~\eqref{D21} coincides with the
right-hand side of \eqref{ev6}, which proves the statement.
\end{proof}

In what follows, we do not use the axiomatics of nonlinear
averaging; instead, we consider the evolution of an arbitrary
(nonnormalized) probability distribution of $g_i$.

For any set of nonnegative $g_i$,  where  $i=1,\dots,l$ we define
the evolution of vectors $\Psi_g(n)\in\cL_{\{M\}}$  by the
recursive formulas
\begin{equation}
\Psi_g(n+1)=R\left(\exp\left(-\beta\wh{\cH}_M\right)
\Psi_g(n)\right),\quad n=0,1,\dots,\quad \Psi_g(0)=\Psi_g.
\label{ev19}
\end{equation}
We use $\psi_g(i_1,\dots,i_M;n)$ to denote the coefficients in
expansion~\eqref{ev7} of the vector $\Psi_g(n)$ in
basis~\eqref{D14}. Thus, $\psi_g(i_1,\dots,i_M;n)$ and
$\Psi_g(n)$. are related by
\begin{equation}
\Psi_g(n)=\sum_{i_1=1}^l\dots\sum_{i_M=1}^l
\psi_g(i_1,\dots,i_M;n)\bfe_{i_1}\otimes\dots\otimes\bfe_{i_M}.
\label{ev20}
\end{equation}
The function $\psi_g(i_1,\dots,i_M;n)$ can be regarded as a
nonnormalized distribution function of the factored ensemble
after $n$ steps under the condition that, at the initial moment,
the distribution function of the factored ensemble has the form
$\psi_g(i_1,\dots,i_M;0)=g_{i_1}\dots g_{i_M}$. In this case,
$g_i$ has the meaning of a nonnormalized distribution function
for one system in the factored ensemble.

Consider the functions
\begin{equation}
F(n,g,M)=-\frac1{M\beta(n+1)} \ln\left(\|\Psi_g(n)\|\right),
\label{ev21}
\end{equation}
and
\begin{equation}
w_i(n,g,M)=\frac1{\|\Psi_g(n)\|}\sum_{i_2=1}^l\dots\sum_{i_M=1}^l
\psi_g(i,i_2,\dots,i_M;n). \label{ev22}
\end{equation}
Since $\psi_g(i_1,\dots,i_M)$ is a distribution function for the
factored ensemble of $M$ systems, formula~\eqref{ev22} defines a
distribution function of one system. Consider the limit of
distribution functions~\eqref{ev21} and~\eqref{ev22} for one
system of the factored ensemble as the number $M$ of its elements
tends to infinity.

\begin{theorem}
As $M\to\infty$  the functions $F(n,g,M)$ and  $w_i(n,g,M)$
approach the following limits: :
\begin{eqnarray}
&&\lim_{M\to\infty} F(n,g,M)=-\frac1\beta\ln\left(\sum_{i=1}^l
e^{-n\beta\lambda_i/(n+1)}g_i^{1/(n+1)}\right)\equiv \wt{F}(n,g),\label{ev23} \\
&&\lim_{M\to\infty}
w_i(n,g,M)=\frac{e^{-n\beta\lambda_i/(n+1)}g_i^{1/(n+1)}}
{\sum_{j=1}^le^{-n\beta\lambda_j/(n+1)}g_j^{1/(n+1)}}\equiv\wt{w}_i(n,g).
\label{ev24}
\end{eqnarray}
\end{theorem}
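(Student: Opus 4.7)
The plan is to derive a closed-form expression for $\|\Psi_g(n)\|$ by induction on $n$, rewrite both $F(n,g,M)$ and $w_i(n,g,M)$ as Gibbs-type averages over occupation vectors, and then extract the large-$M$ limit via Laplace's method with Stirling's formula.

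First I would observe that $\psi_g(i_1,\dots,i_M;0)=g_{i_1}\cdots g_{i_M}$ is symmetric, hence depends only on the occupation vector $M_i(i_1,\dots,i_M)$. The operator $\exp(-\beta\wh{\cH}_M)$ is diagonal in the basis \eqref{D14}, and the data reduction $R$ replaces every sector $\cL_{\{M\}}$ by a scalar multiple of $\Phi_{\{M\}}$, which assigns the same value to every basis vector in that sector. Induction on $n$ therefore shows that $\psi_g(i_1,\dots,i_M;n)=\vp_n(\{M\})$ depends only on the occupation vector and is nonnegative. For such symmetric nonnegative vectors the norm \eqref{ev12} gives $\|\wh{P}_{\{M\}}\Psi_g(n)\|=\binom{M}{M_1,\dots,M_l}\vp_n(\{M\})$, so the recursion \eqref{ev19} collapses to
\[
\vp_{n+1}(\{M\})=\binom{M}{M_1,\dots,M_l}\,e^{-\beta\sum_i\la_iM_i}\,\vp_n(\{M\}),\qquad \vp_0(\{M\})=\prod_i g_i^{M_i},
\]
which iterates explicitly. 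Summing over sectors (each contributing $\binom{M}{M_1,\dots,M_l}$ basis vectors) yields
\[
\|\Psi_g(n)\|=\sum_{\{M\}}\binom{M}{M_1,\dots,M_l}^{n+1}e^{-n\beta\sum_i\la_iM_i}\prod_i g_i^{M_i},
\]
and an identical count with the first coordinate fixed to $i$, which produces the extra factor $M_i/M$, gives
\[
w_i(n,g,M)=\frac{1}{\|\Psi_g(n)\|}\sum_{\{M\}}\frac{M_i}{M}\binom{M}{M_1,\dots,M_l}^{n+1}e^{-n\beta\sum_j\la_jM_j}\prod_j g_j^{M_j}.
\]
Thus $w_i(n,g,M)$ is the expectation of $M_i/M$ under the probability measure on $\{M_1,\dots,M_l\ge0:\sum_iM_i=M\}$ whose weights are the summands above.

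Next I apply Laplace's method. Setting $M_i=Mx_i$ with $x$ in the continuous simplex and using Stirling's formula in the form $\binom{M}{M_1,\dots,M_l}=\exp(MH(x)+O(\log M))$ with $H(x)=-\sum_i x_i\log x_i$, the summand becomes $\exp(M\Phi_n(x)+O(\log M))$, where
\[
\Phi_n(x)=(n+1)H(x)-n\beta\sum_i x_i\la_i+\sum_i x_i\log g_i.
\]
The entropy term makes $\Phi_n$ strictly concave, so it has a unique maximizer $x^*$, which lies in the interior of the simplex provided all $g_i>0$. A Lagrange-multiplier calculation yields
\[
x_i^*=\frac{g_i^{1/(n+1)}e^{-n\beta\la_i/(n+1)}}{\sum_j g_j^{1/(n+1)}e^{-n\beta\la_j/(n+1)}},\qquad \Phi_n(x^*)=(n+1)\log Z,\quad Z=\sum_j g_j^{1/(n+1)}e^{-n\beta\la_j/(n+1)},
\]
the second identity following by substituting $\log x_i^*=(\log g_i-n\beta\la_i)/(n+1)-\log Z$ back into $\Phi_n$. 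The Laplace estimate on the simplex then gives $M^{-1}\log\|\Psi_g(n)\|\to(n+1)\log Z$, and dividing by $-\beta(n+1)$ produces \eqref{ev23}; the concentration of the same probability measure on $x^*$ yields \eqref{ev24}, because $w_i(n,g,M)$ is the expectation of the bounded continuous functional $x\mapsto x_i$.

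The only nontrivial step is making the Laplace asymptotic on the integer simplex rigorous. Strict concavity of $\Phi_n$ gives a quadratic upper bound $\Phi_n(x)\le\Phi_n(x^*)-c|x-x^*|^2$ near the maximizer, while the $O(\log M)$ remainder in Stirling is uniform on lattice points bounded away from the boundary. Since $x^*$ is interior, boundary contributions are exponentially suppressed, so the argument reduces to a standard Gaussian-type estimate around $x^*$, and both the free-energy limit and the concentration of $M_i/M$ on $x_i^*$ follow by textbook arguments.
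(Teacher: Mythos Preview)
Your proof is correct and follows essentially the same route as the paper: derive by induction the closed form $\|\Psi_g(n)\|=\sum_{\{M\}}\binom{M}{M_1,\dots,M_l}^{n+1}e^{-n\beta\sum_i\la_iM_i}\prod_i g_i^{M_i}$ (equivalently the paper's \eqref{ev28}), identify $w_i$ with the expectation of $M_i/M$ under the corresponding weights, and read off both limits by the Laplace method with Stirling. You are actually more explicit than the paper, which simply invokes ``the Laplace method'' to pass from \eqref{ev28}--\eqref{ev29} to \eqref{ev30}--\eqref{ev31} without writing down the variational functional $\Phi_n$ or its maximizer.
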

\begin{proof}
Vector~\eqref{ev16} can be represented in the form
\begin{equation}
\Psi_g=\sum_{\{M\}}\left(\prod_{i=1}^lg_i^{M_i}\right)\Phi_{\{M\}}.
\label{ev25}
\end{equation}
Substituting~\eqref{ev25} into~\eqref{ev19}, we obtain
\begin{equation}
\Psi_g(n)=(M!)^l\sum_{\{M\}}\left(\prod_{i=1}^l
\frac{g_i^{M_i}e^{-n\beta\lambda_i
M_i}}{(M_i!)^n}\right)\Phi_{\{M\}}. \label{ev26}
\end{equation}
This implies
\begin{equation}
\psi_g(i_1,\dots,i_M;n)=(M!)^l
\left(\prod_{i=1}^l\frac{\left(g_ie^{-n\beta\lambda_i}\right)^{M_i}}
{(M_i!)^n}\right), \label{ev27}
\end{equation}
where $M_i$ are expressed in terms of  $i_1,\dots,i_M$ according
to~\eqref{ev9}. Substituting~\eqref{ev27} into~\eqref{ev12}
and summing~\eqref{ev27} over all but one subscripts, we obtain
\begin{eqnarray}
&&\|\Psi_g(n)\|=(M!)^{n+1}\sum_{\{M\}}\prod_{i=1}^l
\frac{g_i^{M_i}e^{-n\beta\lambda_iM_i}}{(M_i!)^{n+1}},\label{ev28}\\
&&\sum_{i_2=1}^l\dots\sum_{i_M=1}^l\psi_g(i,i_2,\dots,i_M;n)=
(M!)^{n+1}\sum_{\{M\}}\left(\prod_{i=1}^l\frac{g_i^{M_i}e^{-n\beta\lambda_iM_i}}
{(M_i!)^{n+1}}\right)\left(\sum_{j=1}^l\frac{M_j}M\right).\label{ev29}
\end{eqnarray}
To determine the asymptotic behavior of the sums in~\eqref{ev28}, \eqref{ev29}
as $M\to\infty$, we apply the Laplace
method. This method gives the equalities
\begin{eqnarray}
&&\lim_{M\to\infty}\frac1M\ln\left(\|\Psi_{p,\lambda}(n)\|\right)=
(n+1)\ln\left(\sum_{i=1}^l e^{-n\beta\lambda_i/(n+1)}
g_i^{1/(n+1)}\right),
\label{ev30} \\
&&\lim_{M\to\infty}\frac1{\|\Psi_{p,\lambda}(n)\|}
\sum_{i_2=1}^l\dots\sum_{i_M=1}^l\psi_{p,\lambda}(i,i_2,\dots,i_M;n)=
\frac{e^{-n\beta\lambda_i/(n+1)}g_i^{1/(n+1)}}
{\sum_{j=1}^le^{-n\beta\lambda_j/(n+1)}g_j^{1/(n+1)}}.
\label{ev31}
\end{eqnarray}
The substitution of~\eqref{ev30} and~\eqref{ev31} into~\eqref{ev21}
and~\eqref{ev22} yields~\eqref{ev23} and~\eqref{ev24}.
This completes the proof of the theorem.
\end{proof}

Thus, by virtue of~\eqref{ev24}, the normalized distribution
$\wt{w}_i(0,g) = \frac{g_i}{\sum_1^e g_i}$ transforms into
$\wt{w}_i(n,g)$ in $n$ evolution steps.

Now, consider the behavior of $\wt{F}(n,g)$~\eqref{ev23} and
$\wt{w}_i(n,g)$~\eqref{ev24} as $n\to\infty$, i.e., the limit
distribution. Let $I$ be a nonempty subset of the set $1,\dots,l$.
Suppose that the $g_i$ satisfy the condition
\begin{equation}
g_i>0,\quad\mbox{for}\quad i\in I,\qquad g_i=0,\quad\mbox{for}\quad
i\notin I. \label{ev32}
\end{equation}
According to ~\eqref{ev23} and \eqref{ev24}, for any $g_i$
satisfying~\eqref{ev32}, we have
\begin{eqnarray}
&&\lim_{n\to\infty}\wt{F}(n,g)=-\frac1\beta\ln\left(\sum_{i\in I}
e^{-\beta\lambda_i}\right),\label{ev33} \\
&&\lim_{n\to\infty}\wt{w}_i(n,g)=\frac{e^{-\beta\lambda_i}}
{\sum_{j=1}^le^{-\beta\lambda_j}}\equiv\rho_i.\label{ev34}
\end{eqnarray}
If the set $I$ coincides with the set of $1,\dots,l$, expressions~\eqref{ev33}
and~\eqref{ev34} are, respectively, the free
energy and Gibbs distribution at temperature $1/\beta$ of a
system with nondegenerate energy levels
$\lambda_1,\dots,\lambda_l$. If $I$ is smaller than  $1,\dots,l$,
then expressions~\eqref{ev33} and~\eqref{ev34} describe the
distribution of the system only over a part of levels, rather
than over all levels. Note also that, if the $g_i$ satisfy the
condition
\begin{equation}
g_i=\left\{
\begin{array}{lcl}
Ae^{-\beta\lambda_i}&\mbox{for}&i\in I\\
0&\mbox{for}&i\notin I
\end{array}
,\right. \label{ev35}
\end{equation}
where  $A$ is an arbitrary positive number, then
$\wt{w}_i(n,g)$~\eqref{ev24} coincides with
$\rho_i$~\eqref{ev34} for any $n$. Thus, the full and partial
Gibbs distributions are invariant with respect to the evolution
described above.

\begin{remark} Consider the special case where the random variable
$\lambda$ takes the same values as the polynomial
\begin{equation}
\cE(N_1)=\sum_{q=0}^p A_q N_1^q, \label{56}
\end{equation}
where $N_1$ ranges over $0,1,\dots,N$. Consider the
satisfiability of condition~\eqref{ev1} in this case.

\begin{proposition}
If $p<N$, then condition~\eqref{ev1} holds for no values of the
coefficients $A_q$, and if $p\ge N$, then condition~\eqref{ev1}
holds for generic coefficients $A_q$.
\end{proposition}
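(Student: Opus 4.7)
The plan is to reformulate condition~\eqref{ev1} as a ${\bf Q}$-linear independence statement about the shifted energies $\cE(j)-\cE(N)$ and then reduce it to a rank computation for a Vandermonde-like matrix.

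First, I would eliminate the last coefficient using the first equation in~\eqref{ev1}: with $l=N+1$ and $\lambda_{j+1}=\cE(j)$ (indexing $j=0,1,\dots,N$), writing $k_{N+1}=-\sum_{j=0}^{N-1}k_{j+1}$ turns~\eqref{ev1} into the single relation $\sum_{j=0}^{N-1}k_{j+1}\bigl(\cE(j)-\cE(N)\bigr)=0$. Hence~\eqref{ev1} holds if and only if the $N$ real numbers $v_j:=\cE(j)-\cE(N)=\sum_{q=1}^{p}A_q(j^q-N^q)$ (the constant term $A_0$ drops out) are linearly independent over~${\bf Q}$.

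Next, I would introduce the $N\times p$ matrix $M$ with entries $M_{jq}=j^q-N^q$ for $0\le j\le N-1$, $1\le q\le p$, and prove the key fact $\mathrm{rank}(M)=\min(N,p)$. The slick way is to embed $M$ in the $(N+1)\times(p+1)$ Vandermonde-type matrix $V_{jq}=j^q$ with $0\le j\le N$, $0\le q\le p$, whose rank equals $\min(N+1,p+1)$ by the standard Vandermonde argument. Subtracting the bottom row from each of the first $N$ rows zeros out the first column in those rows and places $M$ in the upper-right $N\times p$ block, while the untouched bottom row $(1,N,N^2,\dots,N^p)$ is manifestly independent of the reduced upper rows. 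Consequently $\mathrm{rank}(M)=\min(N+1,p+1)-1=\min(N,p)$.

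The proposition then follows by splitting cases. If $p<N$, the left kernel of $M$ has dimension $N-p>0$ and hence contains a nonzero integer vector $c$; for \emph{every} choice of the $A_q$ one has $\sum_j c_j v_j=\sum_q A_q(c^{\top}M)_q=0$, and lifting back via $k_{N+1}=-\sum_j c_j$ produces a nontrivial integer solution, so~\eqref{ev1} fails identically. If $p\ge N$, the left kernel of $M$ is trivial, and the exceptional set of $(A_0,\dots,A_p)$ for which~\eqref{ev1} fails is contained in $\bigcup_{c\in{\bf Z}^N\setminus\{0\}}\{A:\sum_q A_q(c^{\top}M)_q=0\}$; since $c^{\top}M\neq 0$ each set in this countable union is a proper hyperplane in $\bR^{p+1}$, so the exceptional set has Lebesgue measure zero and is meager, i.e.~\eqref{ev1} holds generically. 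The main technical step is the rank computation for $M$; the two-line Vandermonde row reduction is the one place where care is needed, and everything else is bookkeeping.
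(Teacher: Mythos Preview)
The paper states this proposition inside Remark~2 but gives no proof, so there is nothing to compare your argument against directly. Your proof is correct and self-contained: the reduction of~\eqref{ev1} to the $\mathbf{Q}$-linear independence of the $N$ differences $v_j=\cE(j)-\cE(N)$ is exactly right, and the rank computation $\mathrm{rank}(M)=\min(N,p)$ via the Vandermonde embedding is valid --- the essential point being that after subtracting the last row the first column vanishes in the upper block while the bottom row keeps its leading~$1$, so that row is independent of the reduced upper rows and $\mathrm{rank}(V)=\mathrm{rank}(M)+1$. The case split then goes through as written. One cosmetic remark: in the $p\ge N$ case the linear form $\sum_{q=1}^{p}A_q(c^{\top}M)_q$ does not involve $A_0$, but since $c^{\top}M\ne0$ the zero set is still a proper hyperplane in $\bR^{p+1}$, so the measure-zero/meager conclusion is unaffected.
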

\end{remark}

\newpage
\section{Zeroth-order phase transition}

\subsection{Zeroth-order phase transition: general considerations}

\subsubsection{Main large parameters in thermodynamics}

 Thermodynamics studies
steady-state processes in which, {\it independently\/} of its
initial state, the system comes to a state that remains the same
in what follows.
 But if such a state still varies,
then this is a thermodynamic variation only if the process occurs
{\it extremely\/} slowly~\cite{1}.
 In other words,
we slightly vary the state of the system and wait until it
returns to equilibrium and again is independent of the initial
state.

 Thus, a large parameter, time, is invisibly present
in this case, i.e., we observe the system at huge time intervals.
 On the other hand,
thermodynamics is the limit macroscopic theory obtained from the
microscopic statistical physics as the number of particles $N$
tends to infinity.
 Hence there are two large parameters,
and many things depend on their ratio.
 In turn, classical statistical physics
is the limit of quantum statistics as $h\to 0$ ($h$ is the Planck
parameter).
 Thus, already three large parameters ``meet" here:
time scale, number of particles, and~$1/h$.

 In thermodynamics,
sufficiently large time periods are also often considered, and in
these periods only a part of the system comes to equilibrium;
this is the so-called ``local equilibrium."
 For example, when plasma is heated by a magnetic field,
the ions begin to obey the Maxwell distribution after a large
observation period, and the entire system comes to the
thermodynamic equilibrium only after a significantly large period
of time.

\subsubsection{The Gibbs formula}

 Let us consider the Gibbs formula,
which has the following form in quantum statistics.

 Suppose that a system is characterized by
the Hamiltonian operator~$\wh H$ in a Hilbert space (in
particular, in the Fock space)~$\Phi$.
 Suppose that the operator~$\wh H$
has a nonnegative discrete spectrum
$\lambda_0,\lambda_1,\dots,\lambda_n\dots$\,.
 Then the free energy depending on
the temperature~$\theta$ is determined as
$$
E=\theta\ln\sum_{n=0}^\infty e^{-\lambda_n/\theta}\delta_n,
$$
where~$\delta_n$ is the multiplicity of the
eigenvalue~$\lambda_n$.

 We note that one more parameter appears in the Gibbs postulate,
namely,
$$
E=\theta\ln\lim_{M\to\infty}
\sum_{n=0}^Me^{-\lambda_n/\theta}\delta_n.
$$
 This is one of the most important facts,
since $\lim_{M\to\infty}$ and~$\lim_{N\to\infty}$ do not commute!
 It turns out that
one must first pass to the limit as $M\to \infty$ and then as
$N\to \infty$.

 If we speak about local equilibrium,
then, as a rule, the Gibbs formula deals with a subset
$\{\lambda_{n'}\}$ of the set $\{\lambda_n\}$.
 This must be done in the study
of the above problem concerning the ion distribution in plasma on
heating.
 This is also done in glass and
in several other physical problems.

\subsubsection{``Friction'' operator}

At temperature $\theta=   0$, the Gibbs formula gives the value
of the lowest eigenvalue, which, in physics, is called the
``ground state.''

 This is a law which has not been stated accurately
and which was called the ``energetic efficiency" by
N.~N.~Bogolyubov and others.
 The lower energy is energetically more efficient.
 For example, if a system is perturbed
by an operator $\wh V\:(\Phi\to \Phi)$ and this operator is
sufficiently small, then the transition matrix from the
state~$\lambda_n$ to the state~$\lambda_m$ is determined by the
matrix element $(\Psi_n\wh V\Psi_m^*)$, where~$\Psi_n$
and~$\Psi_m$ are the eigenfunctions corresponding to~$\lambda_n$
and~$\lambda_m$, respectively (see~\cite{2}).

 The square of this matrix element
is the probability of transition from the state with
energy~$\lambda_n$ to the state with energy~$\lambda_m$.
 But if
$\lambda_n< \lambda_m$, then this transition is ``not
energetically efficient" and hence, as if from the viewpoint of
statistical physics and thermodynamics, is unrealizable, i.e.,
from the mathematical viewpoint, it must be set to be zero.
 Then the transition matrix is not a self-adjoint matrix
with zero entries above the diagonal.
 This means that friction is taken into account.
 A pendulum oscillates and stops in the end
if friction is taken into account, i.e., it comes to the ``ground
state."
 The ``friction" operator was considered
in more detail in the author's paper~\cite{3}.

\subsubsection{ Phase transitions}

 The derivative of
$E(\theta)$ with respect to~$\theta$ is called the entropy, and
when the entropy has a jump at a point~$\theta_0$, it is said
that a first-order phase transition occurs; when the second-order
derivative has a jump, it is said that a second-order phase
transition occurs, etc.

 But, of course, it is not the function itself that has a jump,
but the leading term of its asymptotics as $N\to \infty$.
 In experiments,
this ``jump" can sometimes be actually extended in time, but we
have agreed to consider variations in sufficiently large periods
of time.
 In these periods,
not only thermodynamic, but also dynamical processes can occur.
 We neglect them and consider only the time periods
in which the system comes to equilibrium with the ``energetic
efficiency" taken into account~\cite{4}.

 The author discovered the {\it zeroth-order phase transition\/}
both in the theory of superfluidity and superconductivity and in
economics (a stock price break-down, a default, etc.), and, quite
unexpectedly, it has turned out that, in view of the natural
axiomatics (see~\cite{5}), the zeroth-order phase transition is
related to quantum statistics and thermodynamics.
 This phase transition
has not been noticed by physicists, and, perhaps, it contradicts
their ideas that the free energy can be determined up to a
constant.

\subsubsection{Metastable state}

 We consider a simple example of
semiclassical approximation of the one-dimensional Schr\"odinger
equation
$$
-h^2y_n^{''}+u(x)y_n=\lambda_ny_n, \qquad y_n(x)\in L_2, \quad
h\ll1,
$$
where $u(x)=(x^2-1)^2+qx$ and $q>   0$ is a constant.

 In classical mechanics,
this equation gives the picture of ``two cups with a barrier
between them.''
 If a particle is at the bottom
of the right-hand cup with higher walls, then it can get into the
other cup with lower walls only if the barrier disappears.

 From the viewpoint of the Gibbs postulate,
as $\theta\to 0$, the particle must be in the deeper well.
 Nevertheless,
it is obvious that if $h\ll 1$, then the particle will stay in
the well with higher walls for a very long time.
 In this case,
the summation in the Gibbs formula must be performed over the
subset of eigenvalues corresponding to the well with higher walls
whose eigenfunctions tend to zero as $h\to0$ outside this well.
 Moreover,
the temperature~$\theta$ must not be too high, so that the
eigenvalues above the barrier do not play any role in the Gibbs
formula.

 Such a state at a local minimum
of the potential well $u(x)$ is an example of a metastable state.

 If we consider the matrix element
of the transition from the lower level corresponding to the well
with higher walls to the lowest level~$\lambda_0$ (corresponding
to the bottom of the deep well), then it turns out to be
exponentially small with respect to~$h$, but any transition to
higher levels is forbidden according to the ``energetic
efficiency" law (i.e., we consider perturbations by ``friction").

\subsubsection{Superfluidity}

 N.~N.~Bogolyubov used Landau's ideas
(see the footnote on p.~219 in~\cite{6}) to show that
superfluidity is not a motion of the fluid, not a dynamics, but a
state of the fluid such as, for example, ice or vapor for water.

 This state corresponds to a metastable state of the system
such that any transition from this state to the normal state is
almost forbidden: it is exponentially small as $N\to \infty$.

 N.~N.~Bogolyubov proved this rigorously
under the assumption that the system of Schr\"odinger equations
is periodic; in other words, the Schr\"odinger equations were
considered on a torus.
 The spectrum of superfluid velocities
(the energy levels corresponding to the relevant momenta) was
discrete.
 This readily distinguishes the state of superfluidity
from the hydrodynamics of fluids.
 In the limit as the torus radius tends to infinity,
the spectrum does not become, as usual, a continuous spectrum,
but becomes an everywhere dense point spectrum.

In the author's opinion, the fact that the spectrum is everywhere
dense pointwise in the limit can be easily explained from the
physical viewpoint.

Indeed, if the system is in a state with a superfluid
velocity~$v$, then its transition to a state with larger velocity
is forbidden by the ``energy state efficiency" law, and its
transition to a state with any lesser velocity is forbidden by
the fact that any decrease in the velocity contradicts the notion
of superfluidity.
 From the mathematical viewpoint,
this means that the larger the torus radius, the less
(exponentially less) is the transition from one state to another.

\subsubsection{Zeroth-order phase transition}

{\bf1.}
 The author explained the spouting effect
discovered in 1938 by J.~Allen and H.~Jones in which a superfluid
was ``flowing'' through a capillary of diameter $10^{-4}$\,cm (in
fact, this superfluid was at the superfluid level of a metastable
state).

 The author used the two-level model to show that,
at a point heated (by light) till the phase transition
temperature~$\theta_c$, the heat capacity becomes infinite, the
entropy has a jump, and the free energy decreases to its lower
value, i.e., to the point attained by the curve issuing from the
ground state heated to the temperature~$\theta_c$.
 This means that
a zeroth-order phase transition occurs.
 In the present paper,
we show that the same picture also appears in N.~N.~Bogolyubov's
model of a weakly nonideal Bose gas~\cite{8}.

 This phenomenon can be easily explained
if we assume that superfluidity is not a thermodynamic state, but
the motion of a fluid.
 At
$\theta= \theta_c$, the fluid becomes viscous and cannot
penetrate through a thin capillary.

 But the point is that this is not a motion, but a state,
and so we have a zeroth-order phase transition.

 The following question arises.
 What will happen to superfluidity
not in a capillary, but in a rather thick pipe?
 Where is the zeroth-order phase transition?
 At
$\theta= \theta_c$, the fluid passes from the superfluid state to
the state of the ordinary fluid with viscosity and begins to flow
according to the usual laws of hydrodynamics.
 After a while,
the motion stops, and the transition from the superfluid state to
the state of a fluid at rest is a thermodynamic transition.
 All the intermediate flow is hydrodynamics,
and must be neglected for our time scale.

{\bf 2.}
 Let us consider the example studied in Sec.~1.5 in more detail.
 We slowly vary the constant~$q$.
 We note that,
in thermodynamics with the field taken into account, there are
two more thermodynamic quantities: the field intensity and the
charge.
 Thus, a variation in~$q$
is a variation in a thermodynamic variable.

 We shall show that the passage of~$q$
through the zero point results in a zeroth-order phase transition.
 Indeed, resonance occurs at
$q=   0$: the eigenfunctions are already not concentrated in one
of the wells and the probability (the square of each
eigenfunction) is identically distributed over both wells.
 The number of eigenvalues is ``doubled''
and the Gibbs distribution has a jump.
 It is precisely here that
one can see the role of time.
 A certain amount of time is required for half the function,
decreasing outside the first well, to be ``pumped" into the
second well.
 But the Gibbs formula does not take this into account.
 It may happen that we will have to wait
for this transition for a long time, as it was in the preceding
example with a thick pipe.

 If~$q$
becomes negative, then all the eigenfunctions remain both in the
first and in the second well, and the Gibbs formula is taken over
all the eigenvalues, rather than over some of them.

{\bf3.}
 Finally, we consider the effect of transition
into the turbulent flow for fluid helium, which, in fact, is very
close to Landau's idea concerning energy pumping between large
and small vortices.

 As will be shown in another paper,
it is precisely the resonance that occurs between vortices of
these two types that results in the zeroth-order phase transition,
which sharply changes the thermodynamic parameters from the
viewpoint of thermodynamics in which we take into account the
large scales of time between transitions and, naturally, the
averaging over these times.

{\bf4.}
 Since only the values of bank notes are important,
while their serial numbers do not play any role, and the
interchange of two bank notes of the same denomination is not an
operation, bank notes obey the Bose statistics.
 The averaging of gains is a nonlinear operation,
as well as addition.
 As was shown above,
the only nonlinear ``arithmetics" (the semiring) that satisfies
the natural axioms of averaging for Bose particles (bank notes)
has the form $a\oplus b=\ln(e^a+ e^b)$.

 This lead to a formula of Gibbs type.
 A variation in
$\beta=1/T$ can be treated as a variation in the rouble
purchasing power caused, for example, by printing a lot of new
bank notes.
 After this, for a period of time,
the balance (equilibrium) is again established.

 Ordinary financial efficiency
plays the role of energetic efficiency.
 The zeroth-order phase transition
is either a default or a crisis~\cite{5, 9}.

\subsection{An exactly solvable model}

 First, we consider
the one-dimensional Schr\"odinger equation for a single particle
on the circle
\begin{equation}
\wh H\psi_k(x)=E\psi_k(x), \qquad \psi_k(x-L)=\psi_k(x), \label{1}
\end{equation}
where $\psi_k(x)$ is the wave function, $x$ takes values on the
circle, and~$\wh H$ is a differential operator of the form
\begin{equation}
\wh H=\epsilon(-i\hbar\partial/\partial x), \qquad \epsilon(z)\in
C^\infty. \label{2}
\end{equation}

Here~$\hbar$ is the Planck constant, $\psi_k(x)=\exp(ip_kx)$,
where $p_k=2\pi\hbar k/L$, $k$ is an arbitrary integer, and the
corresponding eigenvalues are equal to
\begin{equation}
E_k=\epsilon(p_k). \label{3}
\end{equation}
 We pass from the Hamiltonian function
$\epsilon(p)$ to a discrete function $\wt\epsilon(p)$ of the form
\begin{equation}
\wt\epsilon(p)=\epsilon(n\Delta p) \qquad \text{for}\quad \Delta
p\Bigl(n-\frac12\Bigr)\le p<\Delta p\Bigl(n+\frac12\Bigr),
\label{4}
\end{equation}
where~$n$ is an arbitrary integer and~$\Delta p$ is a positive
constant.
 Then the Hamiltonian~\eqref{2} changes appropriately,
and we denote the new Hamiltonian by~$\wh{\wt H}$.
 The eigenfunctions of this operator coincide
with~$\psi_k(x)$, and the eigenvalues~\eqref{3} become
\begin{equation}
\wt E_k=\wt\epsilon(p_k). \label{5}
\end{equation}

In what follows, we assume that the constant~$\Delta p$ takes the
form
\begin{equation}
\Delta p=\frac{2\pi\hbar G}L, \label{6}
\end{equation}

where~$G$ is a positive integer.

In this case, it follows from~\eqref{4} that the set of energy
levels~\eqref{5} is the set of $G$-fold degenerate energy levels
\begin{equation}
\lambda_n=\wt E_{Gn}=\epsilon(p_{Gn}). \label{7}
\end{equation}

 The Schr\"odinger equation
for~$N$ noninteracting particles has the form
\begin{equation}
\wh H_N\Psi(x_1,\dots,x_N)=E\Psi(x_1,\dots,x_N), \label{8}
\end{equation}
where $\Psi(x_1,\dots,x_N)$ is a symmetric function of the
variables $x_1,\dots,x_N$ (bosons).
 The Hamiltonian~$\wh H_N$
is given by the formula
\begin{equation}
\wh H_N=\sum_{j=1}^N\wh{\wt H}_j, \label{9}
\end{equation}
where $\wh{\wt H}_j$ is the Hamiltonian of the particle with
number~$j$, which has the form
\begin{equation}
\wh{\wt H}_j =\wt\epsilon\biggl(-i\hbar\frac\partial{\partial
x_j}\biggr). \label{10}
\end{equation}

The complete orthonormal system of symmetric eigenfunctions of
the Hamiltonian~\eqref{9} has the form
\begin{equation}
\Psi_{\{N\}}(x_1,\dots,x_N)
=\Symm_{x_1,\dots,x_N}\psi_{\{N\}}(x_1,\dots,x_N), \label{11}
\end{equation}
where~$\Symm_{x_1,\dots,x_N}$ is the symmetrization operator over
the variables $x_1,\dots,x_N$, \ $\{N\}$ is the set of
nonnegative integers~$N_k$, $k\in \bf Z $, satisfying the
condition
\begin{equation}
\sum_{k=-\infty}^\infty N_k=N, \label{12}
\end{equation}
and the function $\psi_{\{N\}}(x_1,\dots,x_N)$ is equal to
\begin{equation}
\psi_{\{N\}}(x_1,\dots,x_N) =\prod_{s=1}^N\psi_{k_s}(x_s).
\label{13}
\end{equation}
 Here the indices
$k_1,\dots,k_N$ are expressed in terms of the set $\{N\}$ using
the conditions
\begin{equation}
k_s\le k_{s+1} \quad \text{for all}\ 1\le s\le N-1, \qquad
\sum_{s=1}^N\delta_{kk_s}=N_k \quad \text{for all}\ k\in\bf Z ,
\label{14}
\end{equation}
and~$\delta_{kk'}$ is the Kronecker symbol.
 The eigenvalues of the Hamiltonian~\eqref{9} are
\begin{equation}
E(\{N\})=\sum_{k=-\infty}^\infty\wt E_kN_k. \label{15}
\end{equation}
 We consider interparticle interactions
of the following type.
 We assume that the particles interact in pairs
and the interaction operator for particles with numbers~$j$
and~$k$ has the form
\begin{equation}
\wh V_{jk}=-\frac VNW\bigl(\wh{\wt H}_j-\wh{\wt H}_k\bigr),
\label{16}
\end{equation}
where $V>   0$ is the interaction parameter and the function
$W(\xi)$ is given by the formula
\begin{equation}
W(\xi)=\begin{cases}
1 &\qquad\text{for}\ |\xi|<D,
\\
0 &\qquad\text{for}\ |\xi|\ge D.
\end{cases}
\label{17}
\end{equation}

Here $D>   0$ is the parameter of the width of interaction with
respect to energy.
 The operator~\eqref{17} corresponds to the interaction
under which particles in a pair attract each other and radiate
the energy quantum $-V/N$ if the difference between their
energies is less than~$D$ and do not interact at all if the
difference between their energies is larger than~$D$.
 The Hamiltonian of the system of~$N$
particles with interaction~\eqref{17} has the form
\begin{equation}
\wh H_N=\sum_{j=1}^N\wh{\wt H}_j +\sum_{j=1}^N\sum_{k=j+1}^N\wh
V_{jk}. \label{18}
\end{equation}
 In view of~\eqref{16}, the sums in~\eqref{18} commute,
and hence the set of eigenfunctions of the Hamiltonian~\eqref{18}
coincides with~\eqref{11}.
 It also follows from~\eqref{16}
that the corresponding eigenvalues are equal to
\begin{equation}
\bf E  (\{N\}) =\sum_{k=-\infty}^\infty\wt E_kN_k -\frac
V{2N}\sum_{k=-\infty}^\infty \sum_{l=-\infty}^\infty W(\wt
E_k-\wt E_l)(N_kN_l-\delta_{kl}N_k). \label{19}
\end{equation}
 In what follows, we assume that
the interaction width is sufficiently small, satisfying the
condition
\begin{equation} D<\min_{n\ne m}|\lambda_n-\lambda_m|.
\label{20}
\end{equation}
 The set of energy levels~$\wt E_k$,
$k\in \bf Z $, coincides with the $G$-fold degenerate set of
levels~\eqref{7}.
 Hence, by~\eqref{20},
the set of energy levels~\eqref{19} of the system of~$N$ particles
under study can be written as
\begin{equation} {\bf E}  (\{\wt
N\}) =\sum_{n=-\infty}^\infty\lambda_n\wt N_n -\frac
V{2N}\sum_{n=-\infty}^\infty\wt N_n(\wt N_n-1), \label{21}
\end{equation}
where the level $\bf E  \{\wt N\})$ has the multiplicity
\begin{equation}
\Gamma(\{\wt N\}) =\prod_{n=-\infty}^\infty \frac{(G+\wt
N_n-1)!}{(G-1)!\,\wt N_n!}. \label{22}
\end{equation}
 Here
$\{\wt N\}$ denotes the set of nonnegative integers~$\wt N_n$,
$n\in \bf Z $, satisfying the condition
\begin{equation}
\sum_{n=-\infty}^\infty\wt N_n=N. \label{23}
\end{equation}

 Let us consider the partition function for
the system of~$N$ bosons with Hamiltonian~\eqref{18}.
 Since the energy levels and their multiplicities
are given by formulas~\eqref{21} and~\eqref{22}, respectively, the
partition function at temperature~$\theta$ takes the form
\begin{equation}
Z(\theta,N) =\sum_{\{\wt N\}}\Gamma(\{\wt N\}) \exp\bigl((-\bf E
\{\wt N\})\theta\bigr). \label{24}
\end{equation}
 Here the summation is performed over all sets
$\{\wt N\}$ with condition~\eqref{23} taken into account.

 In what follows, we assume that~$G$
depends on~$N$ and the following condition is satisfied:
\begin{equation}
\lim_{N\to\infty}\frac GN=g>0. \label{25}
\end{equation}

 By
$\wt F(\{\wt N\},\theta)$ we denote a function of the form
\begin{equation}
\wt F(\{\wt N\},\theta) =\bf E  (\{\wt
N\})-\theta\ln\bigl(\Gamma(\{\wt N\})\bigr), \label{26}
\end{equation}
and by $\{\wt N^0\}$ the set of nonnegative numbers~$\wt N_n^0$,
$n\in \bf Z $, for which the function~\eqref{26} is minimal under
condition~\eqref{23}.

 Now, we consider the problem of finding the minimal value
of the function~\eqref{26} under condition~\eqref{23}.
 In the limit as
$N\to \infty$ and under condition~\eqref{25}, the point of minimum
has the form
\begin{equation}
\wt N_n(\theta)=N\bigl(m_n(\theta)+ o(1)\bigr), \label{27}
\end{equation}
where $m_n(\theta)$, $n\in \bf Z $, is determined by the system of
equations
\begin{equation}
\lambda_n-Vm_n+\theta\ln\biggl(\frac{m_n}{g+m_n}\biggr)
=\mu(\theta), \qquad n\in\bf Z , \label{28}
\end{equation}
and $\mu(\theta)$ can be found from the equation
\begin{equation}
\sum_{n=-\infty}^\infty m_n=1. \label{29}
\end{equation}
 The substitution of~\eqref{27} into~\eqref{26}
and then the use of the asymptotic Stirling formula give the
following relation for the specific free energy:
\begin{align}
f(\theta) &\equiv\lim_{N\to\infty}f(\theta,N)
=\lim_{N\to\infty}\frac{\wt F(\{\wt N^0\},\theta)}N
\\ &
=\sum_{n=-\infty}^\infty\biggl(\lambda_nm_n-\frac V2m_n^2\biggr)
+\biggl(\theta m_n\ln\biggl(\frac{m_n}g\biggr)
-\theta(g+m_n)\ln\biggl(1+\frac{m_n}g\biggr)\biggr), \label{30}
\end{align}
where, for brevity, we omit the argument~$\theta$
of~$m_n(\theta)$.
 We introduce the notation
\begin{equation}
\omega_n=\lambda_n-Vm_n. \label{31}
\end{equation}

\remark{Remark}
 In the notation~\eqref{31},
the system of Eqs.~\eqref{28} and~\eqref{29} takes the form
\begin{equation}
\omega_n(\theta) =\lambda_n-V\frac
g{\exp((\omega_n-\mu)/\theta)-1}, \qquad n\in\bf Z , \label{32}
\end{equation}
\begin{equation}
\sum_{n=-\infty}^\infty\frac g{\exp((\omega_n-\mu)/\theta)-1}=1.
\label{33}
\end{equation}
 The system of Eqs.~\eqref{32} and~\eqref{33}
{\it exactly\/} coincides with the temperature Hartree
equations~\cite{10} for the system of~$N$ bosons with
Hamiltonian~\eqref{18}.
\endremark

 We shall study the solutions
of Eqs.~\eqref{28} and~\eqref{29}.
 For
$\theta=   0$, the system has many solutions, which we number by
the integer~$l$:
\begin{equation}
m_n^{(l)}=\delta_{ln}, \qquad n,l\in\bf Z . \label{34}
\end{equation}
 Among all the numbers~$l$,
we choose those that satisfy the condition
\begin{equation}
\nu_{nl}\equiv\lambda_n-\lambda_l+V>0 \qquad \text{for all}\quad
n\ne l. \label{35}
\end{equation}
 For these numbers, there exist solutions
of Eqs.~\eqref{28} and~\eqref{29} converging to~\eqref{34} as
$\theta\to 0$.
 The asymptotics of these solutions as
$\theta\to 0$ has the form
\begin{equation}
m_n^{(l)}\sim g\exp\biggl(-\frac{\nu_{nl}}\theta\biggr) \quad
\forall n\ne l, \qquad 1-m_l^{(l)}\sim g\sum_{n\ne
l}\exp\biggl(-\frac{\nu_{nl}}\theta\biggr). \label{36}
\end{equation}
 Thus, depending on the spectrum~$\lambda_n$,
$n\in \bf Z $, for sufficiently small values of the
temperature~$\theta$, the system of Eqs.~\eqref{28} and~\eqref{29}
has many solutions.
 These solutions,
along with the point of global minimum, also contain points of
local minimum of the function~\eqref{26}.
 The values of the function~\eqref{26}
at the points of local minimum are equal to the free energy of
metastable states.
 We consider the function~\eqref{26} at
$\theta=   0$.
 In this case, it coincides with the energy spectrum
of system~\eqref{21}.
 We consider the energy of the system
for the case in which almost all particles are at the energy
level~$\lambda_l$, which means that following conditions holds:
\begin{equation}
\wt N_n\ll N \qquad \forall n\ne l. \label{37}
\end{equation}
 Deriving~$\wt N_l$
from relation~\eqref{23} and substituting the result
into~\eqref{21}, we see that, in view of~\eqref{37}, the energy
spectrum of the system in this domain has the form
\begin{equation}
\bf E  (\wt N) \approx\lambda_lN-\frac{VN}2 +\sum_{n\ne
l}(\lambda_n-\lambda_l+V)\wt N_n. \label{38}
\end{equation}

 To the Hamiltonian~\eqref{18}, there correspond the Hartree
equation and the appropriate system of variational equations.
 To each
$l\in \bf Z $, there corresponds a solution of the Hartree
equation, and this solution describes the state
\begin{equation}
\wt N_n^{(l)}=N\delta_{nl}. \label{39}
\end{equation}
 Moreover,
the eigenvalues of the system of variational equations
corresponding to this solution of the Hartree equation coincide
with
\begin{equation}
\nu_{nl}=\lambda_n-\lambda_l+V, \qquad n\ne l. \label{40}
\end{equation}
 In~\cite{11, 12},
it was shown that if the eigenvalues of the system of variational
equations for the solution of the Hartree equation are real and
nonnegative, then such a solution corresponds to the ground state
or to a metastable state of the system.
 This means that to all~$l$
for which condition~\eqref{35} holds at $\theta=   0$, there
correspond metastable states of the system of bosons.
 As in the case of zero temperature,
to solutions of Eqs.~\eqref{28} and~\eqref{29} for $\theta\ne 0$,
there corresponds a temperature metastable state if the
point~\eqref{27} is a point of local minimum.
 Now, we note that,
for very large temperatures, the system of Eqs.~\eqref{28}
and~\eqref{29} has only one solution corresponding to the global
minimum of the function~\eqref{26}.
 The asymptotics of this solution as
$\theta\to \infty$ has the form
\begin{equation}
n_m(\theta) \sim g\frac{e^{-\lambda_m/\theta}}
{\sum_{l=-\infty}^\infty e^{-\lambda_l/\theta}}. \label{41}
\end{equation}
 The uniqueness of the solution at large temperatures
means that all metastable states disappear with increasing
temperature.
 The temperature at which a metastable state disappears
is critical for this state.

 Let us consider the behavior
of the entropy and the heat capacity of metastable states when we
approach the critical temperature.
 We consider the metastable state
to which there corresponds a solution of Eqs.~\eqref{28}
and~\eqref{29} converging to~\eqref{34} as $\theta\to 0$ for
some~$l$ for which~\eqref{35} holds.
 In what follows,
we assume that the function $\epsilon(p)$ satisfies the condition
$\epsilon(0)<\epsilon(p)$ for $p\ne 0$.
 Then, according to~\eqref{7},
we have $\lambda_0< \lambda_l$ for $l\ne 0$.
 Hence the solution of
Eqs.~\eqref{28} and~\eqref{29} converging to~\eqref{34} as
$\theta\to 0$ for $l=   0$ corresponds to the temperature ground
state of the system of~$N$ bosons.
 Moreover,
condition~\eqref{35} becomes equivalent to the condition
$\lambda_l-\lambda_0< V$.

 We assume that this inequality holds for
$l\ne 0$.
 The condition
that the corresponding solution $m_n^{(l)}(\theta)$ of
Eqs.~\eqref{28} and~\eqref{29} determines a point of local minimum
of the function~\eqref{26} under condition~\eqref{23} by
formula~\eqref{27} can be written as the following system of
inequalities:
\begin{equation}
\gathered \alpha_n^{(l)}(\theta) \equiv-V+\frac{\theta g}
{m_n^{(l)}(\theta)\bigl(g+m_n^{(l)}(\theta)\bigr)}>0 \quad
\forall n\ne l,
\\
\alpha_l^{(l)}(\theta) \equiv-V+\frac{\theta g}
{m_l^{(l)}(\theta)\bigl(g+m_l^{(l)}(\theta)\bigr)}<0, \qquad
-\sum_{n\ne l}\frac{\alpha_l^{(l)}(\theta)}
{\alpha_n^{(l)}(\theta)}<1.
\endgathered
\label{42}
\end{equation}

 We note that these inequalities hold for~\eqref{36}
as $\theta\to 0$.
 Inequalities~\eqref{42} follow from the condition
that the second variation of the function~\eqref{26} is positive;
the variation is calculated under condition~\eqref{23}.

 The metastable state disappears at the temperature
at which the last inequality in~\eqref{42} becomes an equality.
 We denote this critical temperature by~$\theta^{(l)}_c$.
 It follows from~\eqref{42} and Eqs.~\eqref{28}
and~\eqref{29} for $\theta< \theta_c^{(l)}$ that
$m_n^{(l)}(\theta)$ is an increasing function of the
variable~$\theta$ for $n\ne l$ and $m_l^{(l)}(\theta)$ is a
decreasing function of~$\theta$.
 Moreover, we see that
\begin{equation}
m_l^{(l)}(\theta)>m_n^{(l)}(\theta)>m_{n'}^{(l)}(\theta)
\end{equation}
if $\lambda_n< \lambda_{n'}$ and $n,n'\ne l$.

 From~\eqref{30} we obtain the following expression
for the specific entropy of a metastable state in the limit as
$N\to \infty$:
\begin{equation}
s^{(l)}(\theta)
=\sum_{n=-\infty}^\infty\biggl(\bigl(g+m_n^{(l)}\bigr)
\ln\biggl(1+\frac{m_n^{(l)}}g\biggr)
-m_n^{(l)}\ln\biggl(\frac{m_n^{(l)}}g\biggr)\biggr), \label{43}
\end{equation}
where, for brevity, we omit the argument~$\theta$ of
$m_n^{(l)}(\theta)$.
 Differentiating~\eqref{43}, we obtain
\begin{equation}
\frac{\partial s}{\partial\theta} =\sum_{n\ne l}\frac{\partial
m_n^{(l)}}{\partial\theta} \ln\biggl(\frac{g+m_n^{(l)}}{m_n^{(l)}}
\,\frac{m_l^{(l)}}{g+m_l^{(l)}}\biggr)
>0.
\label{44}
\end{equation}
 The last inequality follows from the properties
of $m_n^{(l)}(\theta)$.
 Since
the last inequality in~\eqref{42} becomes an equality at the
critical temperature, we can show that, as
$\theta\to\theta_c^{(l)}- 0$, the solutions of Eqs.~\eqref{28}
and~\eqref{29} behave as follows:
\begin{equation}
m_n^{(l)}(\theta)-m_n^{(l)}(\theta_c^{(l)})
\approx\frac{C^{(l)}}{\alpha_n^{(l)}(\theta_c^{(l)})}
\sqrt{\theta_c^{(l)}-\theta} \qquad \forall n\in\bf Z , \label{45}
\end{equation}
where~$C^{(l)}$ is a negative number.
 The substitution of~\eqref{45} into~\eqref{43}
shows that the derivative of the specific entropy with respect to
the temperature (this quantity is equal to the heat capacity
divided by the temperature) tends to infinity as
$\theta\to\theta_c^{(l)}- 0$ according to the law
$1/\sqrt{\theta_c^{(l)}-\theta}$.
 This means that the projection
of the Lagrangian manifold corresponding to the metastable state
on the $\theta$-axis is not uniquely determined in a neighborhood
of the critical temperature~\cite{4, 13}.
 The derivative of the temperature with respect to the entropy
vanishes as the critical temperature is approached.
 Therefore, as well as in view of~\eqref{44},
as was already pointed out, the Lagrangian manifold is uniquely
projected on the $s$-axis.
 We note that it follows from the properties of
$m_n^{(l)}(\theta)$ that the following conditions hold for
$\theta< \theta_c^{(l)}$:
\begin{equation}
\aligned m_0^{(l)}(\theta)<m_n^{(l)}(\theta) &\qquad \forall
n\ne0,l,
\\
\alpha_0^{(l)}(\theta)<\alpha_n^{(l)}(\theta) &\qquad \forall
n\ne0,l.
\endaligned
\label{46}
\end{equation}

\newpage

Inequalities~\eqref{46} mean that the potential barrier between
the energy levels~$\lambda_l$ and~$\lambda_0$ is less than the
energy barrier between~$\lambda_l$ and~$\lambda_n$ for $n\ne0,l$.
 This means that,
as the critical temperature is attained, the system of bosons
under study which is in the metastable state with number~$l$
changes its state in a jump and passes to the temperature ground
state.
 This is a zeroth-order phase transition,
since not only the entropy and heat capacity, but also the free
energy have jumps.

If the quantity
\begin{equation}
\min_{n\ne0}|\lambda_n-\lambda_0|= \delta
\end{equation} is small,
then the difference between the free energies in the zeroth-order
phase transition from the lowest metastable state to the ground
state is also small, and the heat capacity in this transition has
a singularity. The asymptotics of the partition function near the
critical point is given by the asymptotics of the canonical
operator in a neighborhood of a focal point. This asymptotics has
the form of an Airy type function of an imaginary argument and
has a singularity as $N\to \infty$. The parameters $\delta\ll 1$,
$N\gg 1$, and $L\gg 1$ can be chosen so that the form of the
phase transition point coincides exactly with the
$\lambda$-point~\cite{14}--\cite{24}.

If adequate measures increasing the Kolmogorov complexity of the
Hartley entropy are not taken, the financial system still does not
necessarily result in a zero-order phase transition (social
explosion).

Now, let us return to the problem about deposits in a pyramid and
banks, i.e., to the assumptions that the bills of the same value
with different numbers are identical. The phase transition
related to the disappearance of the condensate has the following
meaning in this case. As $\beta$ (price) decreases, the actions
cease to be bought and sold at some moment $\beta_0$, i.e.,
nobody trades them at $\beta < \beta_0$, although, seemingly, it
is more profitable to sell them at any price, and therefore,
somebody can speculate in resale, thereby reducing the price of
the actions to zero. Nevertheless, this paradoxical fact is
observed  in practice and mentioned in the literature.

Then the corresponding shares will be involved in a complicated
barter exchange, which makes for an increase in the entropy. The
same happens to other equities, in particular, to currencies
provided that the inflation rate is sufficiently high. All in all,
this ensures that the entropy and complexity increase and can
prevent zero-order phase transition, which occurs only at some
threshold temperature and for some nonlinear interaction swinging
the equilibrium system.

Now we consider the following model.
Assume that the number of people is $n=n_1+n_2$.
The culture (velocity, energy) level $\lambda_1$
typical of $n_1$ is twice less than the level $\lambda_2$
typical of~$n_2$.
The quantities $\lambda_1=1$, $\lambda_2=2$, and $1<\gamma<2$
determine the quadratic interaction
$$
\cE(N_1)= \left(N_1+2N_2-\gamma\frac{N_1^2}{2N}-\frac{\gamma
N_2^2}{2N}\right)+ M/N_0\ln
\frac{(N_1+n_1-1)!(N-N_1+n_2-1)!}{(n_1-1)!N_1!(n_2-1)!(N-N_1)!},
$$
where $n_1$ of people have $N_1$ units of money
and $n_2$ of people have $N_2$ units of money, and $N_1\ll N_2$.
For a sufficiently large $ M/N_0$
(an analog of temperature), there is a phase transition
to the state with $N_1=N$ and $N_2=0$
with an enormous outburst of the kinetic energy
(in physics, this is the Allen--Jones fountain effect~\cite{QuantumEc,3}).

\end{document}